\newtheorem{theorem}{Theorem}
\newtheorem{lemma}{Lemma}
\newtheorem{observation}{Observation}
\newcommand{\lr}[1]{\left( #1\right)}
\newcommand{\LR}[1]{\left\{ #1\right\}}
\newcommand{\Oh}{\mathcal{O}}
\newcommand{\cH}{\mathcal{H}}
\newcommand{\cov}{\mathsf{cov}}
\newcommand{\ex}{\mathbb{E}}
\newcommand{\vs}{V_{\sigma}}
\newcommand{\cova}{\cov_{\alpha}}
\newcommand{\OPT}{\textup{\textsf{OPT}}\xspace}
\newcommand{\maxprob}{\textup{\textsc{Max} $\alpha$-\textsc{FCGP}}\xspace}
\newcommand{\minprob}{\textup{\textsc{Min} $\alpha$-\textsc{FCGP}}\xspace}
\title{FPT Approximation and Subexponential Algorithms for Covering Few or Many Edges \footnote{F. V. Fomin and P. A. Golovach are supported by the Research Council of Norway via the project BWCA (grant no. 314528). T. Inamdar is supported by European Research Council (ERC) under the European Union’s Horizon 2020 research and innovation programme (LOPRE grant no. 819416). T. Koana is supported by the DFG project DiPa (NI 369/21).}}
\author{
	Fedor V. Fomin\thanks{
		Department of Informatics, University of Bergen, Norway.}
	\and
	Petr A. Golovach \addtocounter{footnote}{-1}\footnotemark{}
	\and
	Tanmay Inamdar\addtocounter{footnote}{-1}\footnotemark{}
	\and
	Tomohiro Koana\thanks{Algorithmics and Computational Complexity, Technische Universit\"at Berlin, Germany}
	}
	\date{}
\begin{document}

\maketitle

\begin{abstract}
	We study the \textsc{$\alpha$-Fixed Cardinality Graph Partitioning ($\alpha$-FCGP)} problem, the generic local graph partitioning problem introduced by Bonnet et al. [Algorithmica 2015]. 
	In this problem, we 
are given a graph $G$, two numbers $k,p$ and $0\leq\alpha\leq 1$, the question is whether there is a set $S\subseteq V$ of size $k$ with a specified coverage function $\cova(S)$ at least $p$ (or at most $p$ for the minimization version).
	The coverage function $\cova(\cdot)$ counts edges with exactly one endpoint in $S$ with weight $\alpha$ and edges with both endpoints in $S$ with weight $1 - \alpha$. 	$\alpha$-FCGP generalizes a number of fundamental graph problems such as \textsc{Densest $k$-Subgraph}, \textsc{Max $k$-Vertex Cover}, and \textsc{Max $(k,n-k)$-Cut}.
	
A natural question in the study of  $\alpha$-FCGP is whether the algorithmic results known for its special cases, like \textsc{Max $k$-Vertex Cover}, could be extended to more general settings. One of the simple but powerful methods for obtaining parameterized approximation [Manurangsi, SOSA 2019] and subexponential algorithms [Fomin et al. IPL 2011] for  \textsc{Max $k$-Vertex Cover} is based on the greedy vertex degree orderings.  The main insight of our work is that the idea of greed vertex degree ordering could be used to design fixed-parameter  approximation schemes (FPT-AS) for $\alpha > 0$ and the subexponential-time algorithms for the problem on apex-minor free graphs for maximization with $\alpha > 1/3$ and minimization with $\alpha < 1/3$.
\end{abstract}


\section{Introduction}

In this work, we study a broad class of problems called \textsc{$\alpha$-Fixed Cardinality Graph Partitioning ($\alpha$-FCGP)}, originally introduced by Bonnet et al.~\cite{BEPT15} \footnote{Bonnet et al.~\cite{BEPT15} called the problem `local graph partitioning problem', however we adopt the nomenclature from Koana et al.~\cite{KKNS22}.}.
The input is a graph $G = (V, E)$, two non-negative integers $k, p$, and a real number $0 \le \alpha \le 1$.
The question is whether there is a set $S \subseteq V$ of size exactly $k$ with $\cova(S) \ge p$ ($\cova(S) \le p$ for the minimization variant), where
\begin{align*}
	\cova(S) \coloneqq (1-\alpha) \cdot m(S) + \alpha \cdot m(S, V\setminus S).
\end{align*}
Here, $m(S)$ is the number of edges with both endpoints in $S$, and $m(S, V \setminus S)$ is the number of edges with one endpoint in $S$ and other in $V \setminus S$.
We will call the maximization and minimization problems \maxprob{} and \minprob{}, respectively.
This problem generalizes many problems, namely, \textsc{Densest $k$-Subgraph} (for $\alpha = 0$), \textsc{Max $k$-Vertex Cover}\footnote{This is problem is also referred to as \textsc{Partial Vertex Cover}.} (for $\alpha = 1/2$), \textsc{Max $(k, n - k)$-Cut} (for $\alpha = 1$), and their minimization counterparts.

Although there are plethora of publications that study these special cases, the general \textsc{$\alpha$-FCGP} has not received much attention, except for the work of Bonnet et al.~\cite{BEPT15}, Koana et al.~\cite{KKNS22}, and Schachnai and Zehavi \cite{SZ17}.
In this paper, we aim to demonstrate the wider potential of the existing algorithms designed for specific cases, such as \textsc{Max $k$-Vertex Cover}, by presenting an algorithm that can handle the more general problem of \textsc{$\alpha$-FCGP}.
Algorithms for these specific cases often rely on greedy vertex degree orderings.
For instance, Manurangsi~\cite{Manurangsi19}, showing that a $(1 - \varepsilon)$-approximate solution can be found in the set of $\Oh(k / \varepsilon)$ vertices with the largest degrees, gave a $(1 - \varepsilon)$-approximation algorithm for \textsc{Max $k$-Vertex Cover} that runs in time $(1/\varepsilon)^{\Oh(k)} \cdot n^{\Oh(1)}$.
Fomin et al.~\cite{FominLRS11} gave a $2^{\Oh(\sqrt{k})} \cdot n^{\Oh(1)}$-time algorithm for \textsc{Max $k$-Vertex Cover} on apex-minor graphs via bidimensionality arguments, by showing that an optimal solution $S$ is adjacent to every vertex of degree at least $d + 1$, where $d$ is the minimum degree over vertices in $S$.
In this work, we will give approximation algorithms as well as subexponential-time algorithms for apex-minor free graphs exploiting the greedy vertex ordering.

For approximation algorithms, we will show that both \maxprob{} and \minprob{} admit \emph{FPT Approximation Schemes} (FPT-AS) for $\alpha > 0$, i.e., there is an algorithm running in time $(\frac{k}{\varepsilon \alpha})^{\Oh(k)} \cdot n^{\Oh(1)}$ that finds a set $S$ of size $k$ with $\cova(S) \ge (1 - \varepsilon) \cdot \OPT$ (or $\cova(S) \le (1 + \varepsilon) \cdot \OPT$ for the minimization variant), where $\OPT$ denotes the optimal value of $p$.
Previously, the special cases were known to admit FPT approximation schemes; see \cite{Marx08,GuptaLL18focs,GuptaLL18soda,Manurangsi19} for $\alpha = 1/2$ and \cite{BEPT15} for $\alpha = 1$. 
In particular, the state-of-the-art running time for \maxprob{} with $\alpha = 1/2$ is the aforementioned algorithm of Manurangsi that runs in time $(1/\varepsilon)^{\Oh(k)} \cdot n^{\Oh(1)}$ for maximization (also for the minimization variant). We generalize this argument for $\alpha \ge 1/3$, leading to a faster FPT-AS for \maxprob{} in this range. For $\alpha = 0$, the situation is more negative; \maxprob{} (namely, \textsc{Densest $k$-Subgraph}) does not admit any $o(k)$-approximation algorithm with running time $f(k) \cdot n^{\Oh(1)}$ under the Strongish Planted Clique Hypothesis \cite{ManurangsiRS21}.
\minprob{} is also hard to approximate when $\alpha = 0$ since it encompasses \textsc{Independent Set} as a special case for $p = 0$.

Next, we discuss the regime of subexponential-time algorithms.
Amini et al.~\cite{AFS11} showed that \textsc{Max $k$-Vertex Cover} is FPT on graphs of bounded degeneracy, including planar graphs, giving a $k^{\Oh(k)} \cdot n^{\Oh(1)}$-time algorithm.
They left it open whether it can be solved in time $2^{o(k)} \cdot n^{O(1)}$.
This was answered in the affirmative by Fomin et al.~\cite{FominLRS11}, who showed that \textsc{Max $k$-Vertex Cover} on apex-minor free graphs can be solved in time $2^{\Oh(\sqrt{k})} \cdot n^{\Oh(1)}$ time.
Generalizing this result, we give a $2^{\Oh(\sqrt{k})} \cdot n^{\Oh(1)}$-time algorithm for \maxprob{} with $\alpha > 1/3$ and \minprob{} with $\alpha < 1/3$.
The complexity landscape of \maxprob{} with $\alpha < 1/3$ (and \minprob{} with $\alpha > 1/3$) is not well understood.
It is a long-standing open question whether \textsc{Densest $k$-Subgraph} on planar graphs is NP-hard~\cite{CorneilP84}.
Note that the special case \textsc{Clique} is trivially polynyouomial-time solvable on planar graphs because a clique on 5 vertices does not admit a planar embedding.

\paragraph*{Further related work.}

As mentioned, special cases of \textsc{$\alpha$-FCGP} when $\alpha \in \{ 0, 1/2, 1 \}$ have been extensively studied.
For instance, the W[1]-hardness for the parameter $k$ has been long known for these special cases \cite{Cai08,DowneyEFPR03,GNW07}.
Both \maxprob{} and \minprob{} are actually W[1]-hard for every $\alpha \in [0, 1]$ with the exception $\alpha \ne 1/3$, as can be seen from a parameterized reduction from \textsc{Clique} and \textsc{Independent Set} on regular graphs.
Note that \textsc{$\alpha$-Fixed Cardinality Graph Partitioning} becomes trivial when $\alpha = 1/3$ because $\cova(S) = \frac{1}{3} \cdot \sum_{v \in S} d(v)$ for any $S \subseteq V$ where $d(v)$ is the degree of $v$.

Bonnet et al.~\cite{BEPT15} gave a $(\Delta k)^{2k} \cdot n^{\Oh(1)}$-time algorithm for \textsc{$\alpha$-FCGP} where $\Delta$ is the maximum degree.
They also gave an algorithm with running time $\Delta^k \cdot n^{\Oh(1)}$ for  \maxprob{} with $\alpha > 1/3$ and \minprob{} with $\alpha < 1/3$.
This result was strengthened by Schachnai and Zehavi \cite{SZ17}; they gave a $4^{k + o(k)} \Delta^k \cdot n^{\Oh(1)}$-time algorithm for any value of~$\alpha$.
Koana et al.~\cite{KKNS22} showed that \maxprob{} admits polynomial kernels on sparse families of graphs when $\alpha > 1/3$.
For instance, \maxprob{} admits a $k^{\Oh(d)}$-sized kernel where $d$ is the degeneracy of the input graph.
They also showed analogous results for \minprob{} with $\alpha < 1/3$.

\paragraph*{Preliminaries.}

For an integer $n$, let $[n]$ denote the set $\{ 1, \cdots, n \}$. 

We use the standard graph-theoretic notation and refer to the textbook of Diestel~\cite{Diestel12} for undefined notions.  
In this work, we assume that all graphs are simple and undirected. 
For a graph $G$ and a vertex set $S$, let $G[S]$ be the subgraph of $G$ induced by $X$. 
For a vertex $v$ in $G$, let $d(v)$ be its \emph{degree}, i.e., the number of its neighbors.
For vertex sets $X, Y$, let $m(X) \coloneq |\{ uv \in E \mid u, v \in X \}|$ and $m(X, Y) \coloneq |\{ uv \in E \mid u \in X, v \in Y \}|$. 
In this work, an optimal solution for \maxprob (and \minprob) is a vertex set $S$ of size $k$ such that $\cova(S) \ge \cova(S')$ (resp., $\cova(S) \le \cova(S')$) for every vertex set of size~$k$.
A graph $H$ is a \emph{minor} of $G$ if a graph isomorphic to $H$ can be obtained from $G$ by vertex and edge removals and edge contractions. Given a graph $H$, a family of graph $\mathcal{H}$ is said to be \emph{$H$-minor free} if there is no $G\in \mathcal{H}$ having $H$ as a minor. A graph $H$ is an \emph{apex graph} if $H$ can be made planar by the removal of a single vertex.  

We refer to the textbook of Cygan et al.~\cite{CyganFKLMPPS15} for an introduction to Parameterized Complexity and we refer to the paper of Marx~\cite{Marx08} for an introduction to the area of parameterized approximation. 

\section{FPT Approximation Algorithms} \label{sec:approx}

In this section, we design an FPT Approximation Schemes  for \maxprob as well as \minprob{} parameterized by $k$ and $\alpha$, assuming $\alpha > 0$. 

\begin{theorem} \label{thm:approx-max}
	For any $0 < \alpha \le 1$ and $0 < \varepsilon \le 1$, \maxprob and \minprob each admits an FPT-AS parameterized by $k$, $\varepsilon$ and $\alpha$.
	More specifically, given a graph $G = (V, E)$ and an integer $k$, there exists an algorithm that runs in time $\lr{\frac{k}{\varepsilon \alpha}}^{\Oh(k)} \cdot n^{\Oh(1)}$, and finds a set $S \subseteq V$ such that $\cova(S) \ge (1-\varepsilon) \cdot \cova(O)$ for \maxprob and $\cova(S) \le (1 + \varepsilon) \cdot \cova(O)$ for \minprob, where $O \subseteq V$ is an optimal solution.
\end{theorem}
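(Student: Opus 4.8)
The natural approach generalizes Manurangsi's degree-ordering argument to arbitrary $\alpha>0$. First I would fix an optimal solution $O$ of size $k$ and sort the vertices of $G$ by non-increasing degree: $v_1, v_2, \ldots, v_n$. Let $H$ be the set of the top $t$ vertices in this ordering, where $t = \Theta(k/(\varepsilon\alpha))$ is to be chosen. The key structural claim I would aim to prove is: there exists a set $S \subseteq H$ with $|S| = k$ such that $\cova(S) \ge (1-\varepsilon)\cdot\cova(O)$ (for maximization; the symmetric statement for minimization). Given this claim, the algorithm is immediate — brute-force over all $\binom{t}{k} = (k/(\varepsilon\alpha))^{\Oh(k)}$ subsets of $H$, evaluate $\cova$ on each in polynomial time, and return the best. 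This yields the claimed running time.

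**Proving the structural claim.** Starting from $O$, I would argue that replacing the vertices of $O \setminus H$ by an equal number of "fresh" high-degree vertices from $H \setminus O$ does not lose too much coverage. Write $O = (O \cap H) \cup R$ with $R = O \setminus H$, so $|R| \le k$; pick any $R' \subseteq H \setminus O$ with $|R'| = |R|$ (this is possible as long as $t \ge 2k$, say). Set $S = (O \cap H) \cup R'$. The point is that $\cova(S)$ and $\cova(O)$ differ only in contributions involving $R$ versus $R'$. Using $\cova(S) = (1-\alpha)m(S) + \alpha\, m(S, V\setminus S)$, and the elementary identity $\cova(S) = \alpha \sum_{v\in S} d(v) - (2\alpha - 1) m(S)$, one sees that $\alpha\sum_{v\in S}d(v) \ge \alpha\sum_{v\in O}d(v)$ because every vertex of $R'$ has degree at least that of every vertex of $R$ (both are "outside $H$" in $O$'s case and "inside $H$" for $S$'s case — more precisely, $\min_{v\in R'} d(v) \ge \max_{v \in R} d(v)$ since $R' \subseteq H$ and $R \cap H = \emptyset$). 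The error term is controlled by the number of edges incident to $R \cup R'$, which is at most $|R\cup R'| \cdot \Delta \le 2k\Delta$ where $\Delta = \max_{v\in H} d(v)$; and one shows $\cova(O) \ge \alpha \cdot (\text{something like } t \cdot \Delta / \text{poly} )$ — more carefully, since the top $t$ degrees sum to at least $t \cdot d(v_t)$ and $O$ "sees" a constant fraction of this edge mass through the $\alpha$-term — so that the error is an $\varepsilon$-fraction of $\OPT$ once $t \gtrsim k/(\varepsilon\alpha)$.

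**Main obstacle.** The delicate point is lower-bounding $\cova(O)$ in terms of the high-degree structure to absorb the swapping error. Unlike \textsc{Max $k$-Vertex Cover} (where $\cova$ equals the number of covered edges and a single high-degree vertex already covers many edges), for general $\alpha$ the coverage can be small even when degrees are large — e.g., if $O$ induces a dense subgraph, the $(1-\alpha)$ and $\alpha$ terms partly cancel against $\alpha \sum d(v)$. I expect the right move is a case analysis: either $\OPT$ is already comparable to $\alpha\sum_{v\in O} d(v)$ (so the degree-monotone swap is essentially lossless up to lower-order terms), or $\OPT$ is small, in which case one must instead bound the \emph{absolute} swapping error $|{\cova(S) - \cova(O)}|$ against $\OPT$ directly — here the bound $|{\cova(S)-\cova(O)}| \le O(\alpha \cdot k \cdot d_{\mathrm{gap}})$ for an appropriate degree gap, combined with an averaging argument over which block of $k$ consecutive high-degree vertices to swap in (a pigeonhole over $t/k$ disjoint blocks of size $k$ within $H$, picking the block contributing the least error), should give the $(1\pm\varepsilon)$ guarantee. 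The minimization case is entirely analogous with inequalities reversed, using that $\cova$ is still the same linear-plus-$m(S)$ expression. I would also handle the trivial boundary case $\OPT = 0$ separately (for minimization, return any $k$-subset of an independent set if one exists, else any $k$-set gives a multiplicative-$1$ bound is impossible — but then $\OPT \ge 1$ and the argument applies).
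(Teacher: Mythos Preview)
Your structural claim --- that a $(1-\varepsilon)$-approximate solution always lies among the top $t=\Theta(k/(\varepsilon\alpha))$ highest-degree vertices --- is simply false for $\alpha<1/3$, and the paper itself gives the counterexample (after \Cref{thm:improved-fptas}). Take $N\gg t$ disjoint stars with $k$ leaves each (so $N$ centers of degree $k$), together with a disjoint $k$-clique $O$ (degrees $k-1$). The top $t$ vertices in the degree order are all star centers; any $k$-subset $S$ of them is independent with all neighbors outside $S$, so $\cova(S)=\alpha k^2$. Meanwhile $\cova(O)=(1-\alpha)\binom{k}{2}\approx \tfrac{1-\alpha}{2}k^2$, which exceeds $\alpha k^2$ precisely when $\alpha<1/3$. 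So for $\alpha=1/3-\mu$ no $k$-subset of $H$ is better than a $\tfrac{1/3-\mu}{1/3+\mu}$-approximation, regardless of how large $t$ is. Your swap argument cannot be repaired here: the identity you use is $\cova(S)=\alpha\sum_{v\in S}d(v)+(1-3\alpha)m(S)$, and for $\alpha<1/3$ the $m(S)$ term rewards internal edges, which the degree ordering knows nothing about.

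The paper's proof avoids this trap by not insisting the solution live in a degree prefix. It first proves an \emph{additive} bound (\Cref{lemma:approx-base}): the greedy top-$k$ (resp.\ bottom-$k$) set $S$ satisfies $\cova(S)\ge\OPT-2k^2$ (resp.\ $\le\OPT+2k^2$), via exactly the one-by-one swap you sketch, but tracking the error per swap as $\le 2k$ rather than anything involving $\Delta$. Then it splits on the maximum degree: if $d(v_1)>\tfrac{2k^2}{\varepsilon\alpha}+k$, one shows $\cova(S)\ge 2k^2/\varepsilon$ so the additive $2k^2$ is already an $\varepsilon$-fraction; otherwise the graph has max degree $\Delta=O(k^2/(\varepsilon\alpha))$, and one solves the instance \emph{exactly} using the $4^{k+o(k)}\Delta^k\cdot n^{O(1)}$ algorithm of Schachnai and Zehavi~\cite{SZ17}. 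The minimization case is handled by the analogous split on whether $\OPT\ge 2k^2/\varepsilon$. The ingredient you are missing is this black-box exact algorithm for bounded-degree graphs; without it, there is no way to handle the small-$\OPT$/small-degree regime when $\alpha<1/3$.
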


For the case that $\OPT \coloneq \cova(O)$ is large, the following greedy argument will be helpful.

\begin{lemma}
	\label{lemma:approx-base}
	For \maxprob, let $S$ be the set of $k$ vertices with the largest degrees.
	Then, $\cova(S) \ge \OPT - 2k^2$.
	For \minprob, let $S$ be the set of $k$ vertices with the smallest degrees.
	Then, $\cova(S) \le \OPT + 2k^2$.
\end{lemma}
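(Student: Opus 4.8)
The plan is to rewrite $\cova(S)$ as a linear combination of a ``separable'' term, $\sum_{v \in S} d(v)$, and a ``bounded'' term, $m(S)$. First I would record the handshake-type identity $\sum_{v \in S} d(v) = 2 m(S) + m(S, V \setminus S)$, valid for any vertex set $S$. Substituting $m(S, V\setminus S) = \sum_{v\in S} d(v) - 2m(S)$ into the definition $\cova(S) = (1-\alpha) m(S) + \alpha\, m(S, V\setminus S)$ gives
\begin{align*}
\cova(S) = \alpha \sum_{v \in S} d(v) + (1 - 3\alpha) \cdot m(S).
\end{align*}
This reformulation is the crux: the first term depends only on the multiset of degrees of the vertices in $S$, while the second is controlled because $0 \le m(S) \le \binom{k}{2}$ whenever $|S| = k$.

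For \maxprob, let $S$ be a set of $k$ vertices of largest degree and let $O$ be an optimal solution. By the choice of $S$ we have $\sum_{v \in S} d(v) \ge \sum_{v \in O} d(v)$, so subtracting the two expressions yields
\begin{align*}
\cova(S) - \OPT = \alpha \Big( \sum_{v \in S} d(v) - \sum_{v \in O} d(v) \Big) + (1 - 3\alpha) \big( m(S) - m(O) \big) \ge (1 - 3\alpha)\big(m(S) - m(O)\big).
\end{align*}
Since $\alpha \in [0,1]$ we have $|1 - 3\alpha| \le 2$, and since $m(S), m(O) \in [0, \binom{k}{2}]$ we have $|m(S) - m(O)| \le \binom{k}{2} \le k^2/2$; hence the right-hand side is at least $-2 \cdot (k^2/2) = -k^2 \ge -2k^2$, giving $\cova(S) \ge \OPT - 2k^2$. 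The argument for \minprob is symmetric: taking $S$ to be the $k$ vertices of smallest degree reverses the inequality to $\sum_{v\in S} d(v) \le \sum_{v \in O} d(v)$, so $\cova(S) - \OPT \le (1-3\alpha)\big(m(S) - m(O)\big) \le k^2 \le 2k^2$, i.e.\ $\cova(S) \le \OPT + 2k^2$.

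I expect no real obstacle here; the only points requiring mild care are tracking the sign of the coefficient $1-3\alpha$ (which is why the error term is stated as $2k^2$, uniformly in $\alpha$, rather than something sharper), and observing that we may assume $n \ge k$, since otherwise the instance is trivial and there is nothing to prove.
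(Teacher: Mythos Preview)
Your proof is correct, and it takes a genuinely different route from the paper. The paper argues by a vertex-exchange sequence: it orders $S\setminus O=\{y_1,\dots,y_t\}$ and $O\setminus S=\{w_1,\dots,w_t\}$ so that $d(y_i)\ge d(w_i)$ (resp.\ $\le$ for \minprob), defines $O_i=(O_{i-1}\setminus\{w_i\})\cup\{y_i\}$, and shows $|\cova(O_i)-\cova(O_{i-1})|\le 2k$ by bounding the four cut quantities $m_1,\dots,m_4$ that change in a single swap; summing over $t\le k$ swaps gives the $2k^2$ bound. Your argument bypasses the swap analysis entirely by using the closed-form identity $\cova(S)=\alpha\sum_{v\in S}d(v)+(1-3\alpha)m(S)$, which separates the objective into a degree-sum term (handled exactly by the greedy choice) and a bounded internal-edge term. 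This is shorter, and it actually yields the sharper constant $k^2$ rather than $2k^2$, since $|1-3\alpha|\le 2$ and $|m(S)-m(O)|\le\binom{k}{2}$. The paper's exchange framework is perhaps more robust if one later wants to track finer information about intermediate solutions, but for the lemma as stated your decomposition is the cleaner tool.
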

\begin{proof}
	Without loss of generality, we assume that $O \neq S$.
	Let $S \setminus O = \LR{y_1, y_2, \ldots, y_t}$, and $O \setminus S = \LR{w_1, w_2, \ldots, w_t}$, where $1 \le t \le k$.
	Here, we index the vertices so that $d(y_i) \ge d(y_j)$ and $d(w_i) \ge d(w_j)$ (for \minprob, $d(y_i) \le d(y_j)$ and $d(w_i) \le d(w_j)$) for $i < j$.
	Note that due to the choice of $S$, it holds that $d(y_i) \ge d(w_i)$ ($d(y_i) \le d(w_i)$ for \minprob) for each $1 \le i \le t$.
	
	Now we define a sequence of solutions $O_0, O_1, \ldots, O_t$, where $O_0 = O$, and for each $1 \le i \le t$,  $O_i \coloneqq (O_{i-1} \setminus \LR{w_i}) \cup \LR{y_i}$. Note that $O_t = S$.
	We claim that for each $1 \le i \le t$, $\cova(O_i) \ge \cova(O_{i-1}) -2k$ for \maxprob and $\cova(O_i) \le \cova(O_{i-1}) + 2k$ for \minprob. To this end, we note that $O_i$ is obtained from $O_{i-1}$ by removing $w_i$ and adding $y_i$. Thus, $\cova(O_i) = \cova(O_{i-1}) - (\alpha m_1 + ((1-\alpha) - \alpha) \cdot m_2) + \alpha m_3 + ((1-\alpha) - \alpha) \cdot m_4$, where
	\begin{align*}
		m_1 &\coloneqq m(\LR{w_i}, V \setminus O_{i-1}), & m_2 &\coloneqq m(\LR{w_i}, O_{i-1} \setminus \LR{w_i}), \\
		m_3 &\coloneqq m(\LR{y_i}, V \setminus O_{i}), & m_4 &\coloneqq m(\LR{y_i}, O_{i}\setminus \LR{w_i}).
	\end{align*}
	Observe that $d(w_i) - k \le m_1 \le d(w_i)$, $d(y_i) - k \le m_3 \le d(y_i)$, and $0 \le m_2, m_4 \le k$.
	We consider \maxprob first. We have that
	\begin{align*}
		\cova(O_i)
		&=  \cova(O_{i-1}) + \alpha (m_3 - m_1) + (1 - 2\alpha) (m_4 - m_2) \\
		&\ge \cova(O_{i-1}) + \alpha (m_3 - m_1) - |(1 - 2\alpha) (m_4 - m_2)|.
	\end{align*}
	Since $m_3 - m_1 \ge d(y_i) - d(w_i) - k \ge -k$ and $|(1 - 2\alpha) (m_4 - m_2)| \le k$,
	we obtain $\cova(O_i) \ge \cova(O_{i - 1}) - 2k$, regardless of the value of $\alpha$.
	We consider \minprob next. It holds that 
	\begin{align*}
		\cova(O_{i}) 
		&= \cova(O_{i-1}) + \alpha (m_3 - m_1) + (1 - 2\alpha) (m_4 - m_2) \\
		&\le \cova(O_{i-1}) + \alpha (m_3 - m_1) + |(1 - 2\alpha) (m_4 - m_2)|.
	\end{align*}
	Since $m_3 - m_1 \le d(y_i) - d(w_i) + k \le k$ and $|(1 - 2\alpha) (m_4 - m_2)| \le k$,
	we obtain $\cova(O_i) \le \cova(O_{i - 1}) + 2k$, regardless of the value of $\alpha$.
	
	Therefore, $\cova(O_t) \ge \cova(O_{0}) - 2kt \ge \OPT - 2k^2$ for \maxprob and $\cova(O_t) \le \cova(O_0) + 2kt \le \OPT + 2k^2$ for \minprob.
\end{proof}

\Cref{lemma:approx-base} allows us to find an approximate solution when $\OPT$ is sufficiently large.
The case that $\OPT$ is small remains.
We use different approaches for \maxprob and \minprob.

\paragraph*{Algorithm for \maxprob.}

Let $v_1$ be a vertex with the largest degree.
Our algorithm considers two cases depending on whether $d(v_1) > \Delta \coloneq \frac{2k^2}{\varepsilon \alpha} + k$.
If $d(v_1) > \Delta$, we can argue that the set $S$ from \Cref{lemma:approx-base} a $(1-\varepsilon)$-approximate solution.
To that end, we make the following observation.
\begin{observation} \label{obs:highdeg}
	If $d(v_1) > \Delta$, then $2k^2 \le \varepsilon \cdot \cova(S)$.
\end{observation}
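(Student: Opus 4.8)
The plan is to get a lower bound on $\cova(S)$ by looking only at the single highest-degree vertex $v_1$, which lies in $S$ since $S$ is precisely the set of $k$ vertices of largest degree. First I would count the edges incident to $v_1$: as $|S| = k$, at most $k-1$ of the $d(v_1)$ edges at $v_1$ can have their other endpoint in $S$, so at least $d(v_1) - (k-1)$ of them are cut edges. Hence $m(\LR{v_1}, V \setminus S) \ge d(v_1) - k + 1 > \Delta - k + 1$, using the hypothesis $d(v_1) > \Delta$.

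Next I would feed this into the definition of $\cova$. Since $0 < \alpha \le 1$ we have $1 - \alpha \ge 0$, so $\cova(S) = (1-\alpha)\, m(S) + \alpha\, m(S, V \setminus S) \ge \alpha\, m(S, V \setminus S) \ge \alpha\, m(\LR{v_1}, V \setminus S)$, where the last step holds because the edges counted by $m(\LR{v_1}, V\setminus S)$ form a subset of those counted by $m(S, V\setminus S)$. Combining with the previous bound and the definition $\Delta = \frac{2k^2}{\varepsilon\alpha} + k$ gives $\cova(S) \ge \alpha(\Delta - k) = \frac{2k^2}{\varepsilon}$, which is exactly $2k^2 \le \varepsilon \cdot \cova(S)$.

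I do not expect any real obstacle: the whole argument is a one-line degree-counting estimate and does not use $\alpha < 1/3$, $H$-minor-freeness, or any property of the optimal solution $O$. The only points needing a word of care are that $v_1 \in S$ by the choice of $S$; that $d(v_1) > \Delta \ge k$ ensures $d(v_1) - k + 1 \ge 1$, so the count of cut edges is genuinely nonnegative; and that discarding the $(1-\alpha)\,m(S)$ term is justified precisely because $\alpha \le 1$, so that coefficient is nonnegative. With \Cref{lemma:approx-base} this observation then immediately yields $\cova(S) \ge \OPT - 2k^2 \ge (1-\varepsilon)\OPT$ in the high-degree case.
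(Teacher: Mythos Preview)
Your proof is correct and essentially identical to the paper's: both lower-bound $m(S,V\setminus S)$ via the single vertex $v_1\in S$ by noting at most $k$ (you sharpen to $k-1$) of its incident edges can stay inside $S$, then drop the nonnegative $(1-\alpha)m(S)$ term and plug in $\Delta=\frac{2k^2}{\varepsilon\alpha}+k$. Your write-up is in fact slightly more careful in explicitly justifying $v_1\in S$ and the nonnegativity of the discarded term.
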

\begin{proof}
	Note that $m(S, V\setminus S) = \sum_{u \in S} m(\LR{u}, V \setminus S) \ge m(\{v_1\}, V\setminus S) \ge d(v_1) - k$, where the inequality follows from the fact that at most $k$ edges incident to $v_1$ can have the other endpoint in $S$. This implies that 
	\begin{align*}
		\cova(S) \ge \alpha \cdot m(S, V \setminus S) \ge \alpha \cdot (d(v_1) - k) &\ge  \frac{2k^2}{\varepsilon}.
	\end{align*}
	Where we use the assumptions that $0 < \alpha \le 1$ and $d(v_1) \ge \Delta$.
\end{proof}
Thus, for $d(v_1) > \Delta$, we have $\OPT \le \cova(S) + 2k^2 \le (1 + \varepsilon) \cdot \cova(S)$, and thus $\cova(S) \ge (1 - \varepsilon) \cdot \OPT$.

So assume that $d(v_1) < \Delta$.
In this case, the maximum degree of the graph is bounded by $\Delta = \frac{2k^2}{\varepsilon \alpha} + k = \Oh(\frac{k^2}{\varepsilon \alpha})$. In this case, we solve the problem optimally using the algorithm of Shachnai and Zehavi \cite{SZ17} for \minprob, that runs in time $4^{k+o(k)} \cdot \Delta^k \cdot n^{\Oh(1)}$, which is at most $\lr{\frac{k^2}{\varepsilon\alpha}}^{\Oh(k)} \cdot n^{\Oh(1)}$. Combining both cases, we conclude the proof of \Cref{thm:approx-max} for \maxprob.

\paragraph*{Algorithm for \minprob.}

For \minprob, our algorithm considers two cases depending on the value of $\OPT$.
If $\OPT \ge \frac{2k^2}{\varepsilon}$, then our algorithm returns the set $S$ from \Cref{lemma:approx-base}.
Note that $\cova(S) \le \OPT + 2k^2 \le (1 + \varepsilon) \cdot \OPT$.

Now suppose that $\OPT < \frac{2k^2}{\varepsilon}$. In this case, we know that $O$ cannot contain a vertex of degree larger than $\Delta \coloneqq \frac{2k^2}{\alpha \varepsilon} + k$, for otherwise, $\cova(O) > \alpha (\Delta - k) \ge \OPT$, which is a contradiction. Thus, in this case the maximum degree of the graph is bounded by $\Delta$, and again we can solve the problem optimally in time $\lr{\frac{k^2}{\varepsilon\alpha}}^{\Oh(k)} \cdot n^{\Oh(1)}$, using the algorithm of Shachnai and Zehavi \cite{SZ17} for \minprob.

Since the value of $\OPT$ is unknown to us, we cannot directly conclude which case is applicable.
So we find a solution for each case and return a better one. This completes the proof of \Cref{thm:approx-max} for \minprob. 

\subsection{Faster FPT-AS for \maxprob when $\alpha \ge 1/3$}
In this section, we show that that a simpler idea of Manurangsi \cite{Manurangsi19} gives a faster FPT-AS for \maxprob when $\alpha \ge 1/3$, i.e., $\alpha \ge 1-2\alpha$, leading to the following theorem.

\begin{theorem} \label{thm:improved-fptas}
	For any $1/3 \le \alpha \le 1$, \maxprob admits an FPT-AS running in time $\lr{\frac{1}{\varepsilon}}^{\Oh(k)} \cdot n^{\Oh(1)}$.
\end{theorem}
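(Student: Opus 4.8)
The plan is to transport Manurangsi's argument for \textsc{Max $k$-Vertex Cover} into the regime $\alpha\ge 1/3$, the only structural input being the identity
\[
	\cova(S)\;=\;\alpha\sum_{v\in S}d(v)+(1-3\alpha)\,m(S)\;=\;\alpha\sum_{v\in S}d(v)-(3\alpha-1)\,m(S),
\]
which is immediate from $m(S,V\setminus S)=\sum_{v\in S}d(v)-2m(S)$ (it is the formula $\cova(S)=\tfrac13\sum_{v\in S}d(v)$ noted for $\alpha=1/3$, continued to all $\alpha$). For $1/3\le\alpha\le 1$ we have $0\le 3\alpha-1\le 2$, so $\cova$ has exactly the shape of the \textsc{Max $k$-Vertex Cover} objective (the case $\alpha=1/2$, up to the factor $\tfrac12$): a sum of nonnegative per-vertex rewards $\alpha\,d(v)$ minus a nonnegative penalty $3\alpha-1$ per internal edge. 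This is the single place the hypothesis $\alpha\ge 1/3$, i.e. $\alpha\ge 1-2\alpha$, is used; for $\alpha<1/3$ the ``penalty'' is a reward and the scheme collapses.

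The heart of the proof is the structural statement generalizing Manurangsi's: if $V_H$ is the set of the $h:=\lceil 2k/\varepsilon\rceil$ vertices of largest degree, then there is a set $S\subseteq V_H$ with $|S|=k$ and $\cova(S)\ge(1-\varepsilon)\OPT$. Granting this, the algorithm is to enumerate all $\binom{h}{k}$ size-$k$ subsets of $V_H$, compute $\cova$ on each in polynomial time, and return the best; since $\binom{h}{k}\le(e h/k)^{k}=(1/\varepsilon)^{\Oh(k)}$, this runs in $(1/\varepsilon)^{\Oh(k)}\cdot n^{\Oh(1)}$ time, as claimed. (It does no harm to also evaluate the single candidate ``the $k$ largest-degree vertices'' and output the better of the two.)

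For the structural statement I would argue in the style of \Cref{lemma:approx-base}. If $\OPT\ge 2k^2/\varepsilon$ then the $k$ largest-degree vertices already form a $(1-\varepsilon)$-approximation by \Cref{lemma:approx-base} (its additive slack $2k^2$ is at most $\varepsilon\OPT$), and this set lies in $V_H$. So assume $\OPT<2k^2/\varepsilon$ and also $O\not\subseteq V_H$ for an optimal $O$. Put $W:=O\setminus V_H$, $t:=|W|\le k$; every $w\in W$ has $d(w)\le d_h$, the $h$-th largest degree of $G$, and since $h\ge 2k$ we have $|V_H\setminus O|\ge h-k\ge t$. Choose $Y\subseteq V_H\setminus O$ with $|Y|=t$ and set $S:=(O\cap V_H)\cup Y$, so $S\subseteq V_H$, $|S|=k$. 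Every vertex of $Y\subseteq V_H$ has degree at least $d_h\ge d(w)$ for all $w\in W$, hence $\sum_{v\in S}d(v)\ge\sum_{v\in O}d(v)$ and therefore $\alpha\sum_{v\in S}d(v)\ge\OPT+(3\alpha-1)m(O)\ge\OPT$. Moreover $m(S)\le m(O)+\mu$, where $\mu$ is the number of edges of $G[S]$ incident to $Y$, so
\[
	\cova(S)\;\ge\;\OPT-(3\alpha-1)\mu\;\ge\;\OPT-2\mu .
\]
It remains to choose $Y$ so that $\mu\le\tfrac{\varepsilon}{2}\OPT$: among the $\Omega(k/\varepsilon)$-fold surplus of eligible vertices one selects, by an averaging/counting argument, $t$ vertices having few edges to $O\cap V_H$ and few edges among themselves, bounding the relevant edge masses $\sum_{v\in O\cap V_H}d(v)$ and $m(V_H)$ against $\OPT$ via the fact that in this case the maximum degree is $\Oh(k^2/(\varepsilon\alpha))$.

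The main obstacle is precisely this last step: bounding $\mu$, the number of new internal edges created by the swap, by $\varepsilon\OPT$. A crude exchange bound gives only $\mu=\Oh(k^2)$ — exactly the additive slack of \Cref{lemma:approx-base} — which is worthless once $\OPT=o(k^2/\varepsilon)$; extracting a genuine multiplicative $(1-\varepsilon)$ guarantee forces one to exploit that $|V_H|$ exceeds $k$ by a factor $\Omega(1/\varepsilon)$, so that low-``conflict'' vertices $Y$ exist, and to make the accompanying edge count tight. Everything else — the identity, the enumeration, and the running-time estimate — is routine.
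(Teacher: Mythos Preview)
Your high-level strategy matches the paper's: restrict attention to the top $\Theta(k/\text{poly}(\varepsilon))$ vertices by degree, argue that this set contains a $(1-\varepsilon)$-approximate solution, then enumerate. Your identity $\cova(S)=\alpha\sum_{v\in S}d(v)-(3\alpha-1)m(S)$ is a clean way to isolate where $\alpha\ge 1/3$ enters. However, the proposal is not a proof: you explicitly leave the only nontrivial step --- bounding the new internal edges $\mu$ by $\varepsilon\,\OPT$ --- unproven, and the route you sketch for closing it does not work.

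Concretely, the appeal to ``maximum degree is $\Oh(k^2/(\varepsilon\alpha))$'' is a dead end. That bound yields only $m(V_H)\le |V_H|\cdot\Delta=\Oh(k^3/\varepsilon^2)$ and $\sum_{v\in O\cap V_H}d(v)\le k\Delta=\Oh(k^3/\varepsilon)$, quantities that are hopeless to compare to $\OPT<2k^2/\varepsilon$. The relevant bound you are missing is simply that edges between $O_i:=O\cap V_H$ and $U:=V_H\setminus O$ are \emph{cut} edges for $O$, so $\alpha\,m(O_i,U)\le\cova(O)=\OPT$; no degree bound is needed for that term. The real obstruction is $m(Y)$: your inequality $\cova(S)\ge\OPT-(3\alpha-1)\mu$ uses the degree comparison $\sum_{Y}d\ge\sum_{W}d$ only to cancel, discarding any surplus. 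When $U$ is internally dense (so $\mathbb{E}[m(Y)]$ is large), there is nothing on the right-hand side to pay for those edges, and no averaging over $|U|=\Theta(k/\varepsilon)$ saves you.

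The paper closes this gap differently. It takes $|V'|=k+\lceil 4k/\varepsilon^2\rceil$ (so $\rho:=|O_o|/|U|\le\varepsilon^2/4$), picks $U^*\subseteq U$ uniformly at random, and writes $\cova(S)=\cova(O)-A+B$ with the \emph{gain} $B$ retained rather than lower-bounded by zero. The key term is $Q_1\ge\alpha\rho\sum_{u\in U}d_{V\setminus O_i}(u)$, the expected contribution of $U^*$'s edges going outside $O_i$; this is exactly the credit that compensates for large $m(U)$. A two-case analysis on whether $\sum_{u\in U}d(u)\gtrless(4/\varepsilon)\,m(O_i,U)$ then shows $\mathbb{E}[B-A]\ge-\varepsilon\,\cova(O)$: in one case both $A$ and the conflict term are at most $\Oh(\varepsilon)\cdot\alpha\,m(O_i,U)\le\Oh(\varepsilon)\,\OPT$; in the other, $Q_1$ alone already recovers $(1-\varepsilon/2)A$. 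Your exchange bound corresponds to throwing $Q_1$ away, which is precisely why it stalls.
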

\begin{proof}

Let $0 < \varepsilon < 1$ be fixed and let us sort the vertices of $V(G)$ by their degrees (breaking ties arbitrarily). Let $V' \subseteq V(G)$ denote the $k + \lceil\frac{4k}{\varepsilon^2} \rceil$ vertices of the largest degrees. We show that $V'$ contains a $(1-\varepsilon)$-approximate solution. Let $O$ denote an optimal solution for \maxprob. Further define $O_{i} \coloneqq O \cap V'$, $O_{o} \coloneqq O \setminus V'$.

Let $U \coloneqq V' \setminus O_{i}$ and let $U^* \subseteq U$ be a subset of size $|O_o|$ chosen uniformly at random from $U$. Let $\rho \coloneqq \frac{|O_o|}{|U|} \le \frac{k}{|U|} \le \varepsilon^2/4$.
In \Cref{lem:random}, we show that $\ex[\cova(O_{i} \cup U^*)] \ge (1-\varepsilon) \cdot \cova(O)$, which implies that $V'$ contains a $(1-\varepsilon)$-approximate solution. The algorithm simply enumerates all subsets of size $k$ from $V'$ and returns the best solution found. It follows that the running time of the algorithm is $\binom{|V'|}{k} \cdot n^{\Oh(1)} = \lr{\frac{1}{\varepsilon}}^{\Oh(k)} \cdot n^{\Oh(1)}$. All that remains is the proof of the following lemma.

\begin{lemma} \label{lem:random}
	$\ex[\cova(O_{i} \cup U^*)] \ge (1-\varepsilon) \cdot \cova(O)$.
\end{lemma}
\begin{proof}
	We fix some notation. For a vertex $u \in V$ and a subset $R \subseteq V$, we use $d_R(u)$ to denote the number of neighbors of $u$ in $R$. When $R = V$, we use $d(u)$ instead of $d_V(u)$. Let $S = O_{i} \cup U^* = (O \setminus O_{o}) \cup U^*$. We want to analyze the expected value of $\cova(S)$. To this end, we write $\cova(S) = \cova(O) - A + B$, where $A$ is the ``loss'' in the objective due to removal of $O_{out}$ and $B$ is the ``gain'' in the objective due to addition of $U^*$, defined as follows. 
	\begin{align*}
		A &= \alpha \cdot m(O_o, V\setminus O_o) + (1-\alpha) \cdot m(O_o)
		\\B &= Q_1 + Q_2 - \alpha \cdot m(O_i, U^*), \text{where, }
		\\Q_1 &= \alpha \cdot m(U^*, V \setminus (U^* \cup O_i)) + (1-\alpha) \cdot m(U^*) 
		\\Q_2 &= (1-\alpha) \cdot m(O_i, U^*)
	\end{align*}
	$Q_1$ is the total contribution of the edges with at least one endpoint in $U^*$ and other outside $S$, and $Q_2$ is the total contribution of edges with one endpoint in $U^*$ and other in $O_i$. Note that the lemma is equivalent to showing that $\ex[B - A] \ge - \varepsilon \cdot \cova(O)$, where the expectation is over the choice of $U^*$.
	
	Since $A$ does not depend on the choice of $U^*$, we have
	\begin{align}
		\ex[A] = A = \alpha \cdot m(O_o, V\setminus O_o) + (1-\alpha) \cdot m(O_o) 
		\le \alpha \cdot m(O_o, V\setminus O_o) + 2 \alpha \cdot m(O_o) = \alpha \sum_{v \in O_o} d(v) \label{eqn:1}
	\end{align}
	Here the first inequality follows from $\alpha \ge 1/3$.
	Now let us consider $\ex[B] = \ex[Q_1 + Q_2 - \alpha \cdot m(U^*, O_i)]$. For any pair of distinct vertices $u, v$, let $X_{uv} = 1$ if $\LR{u, v}$ is an edge and $X_{uv} = 0$ otherwise. Then, consider
	\begin{align}
		\ex[m(U^*, O_i)] &= \sum_{u \in U} \sum_{v \in O_i} X_{uv} \cdot \Pr(v \in U^*) = \rho \sum_{u \in U} \sum_{v \in O_i} X_{uv} \le \frac{\varepsilon^2}{4} \cdot m(O_i, U) 
		\label{eqn:2}
	\end{align}
	Now we analyze $\ex[Q_1]$. 
	For every edge with one endpoint in $U$ and the other in $V \setminus (U \cup O_i)$, there is a contribution $\alpha$ to $Q_1$ with probability $\rho$.
	Moreover, for every edge with both endpoints in $U$, the contribution to $Q_1$ is $\alpha$ with probability $2 \rho (1 - \rho)$ and $1 - \alpha$ with probability $\rho^2$.
	Thus, we obtain
	\begin{align}
		\ex[Q_1]
		&= \alpha \rho \cdot m(U, V \setminus (U \cup O_i)) + (2\alpha \rho(1 - \rho) + (1 - \alpha)\rho^2) \cdot m(U) \nonumber\\
		&\ge \alpha \rho \cdot m(U, V \setminus (U \cup O_i) + (2 \alpha \rho + (1 - 3 \alpha) \rho^2) \cdot m(U) \nonumber\\
		&= \alpha \rho \cdot (m(U, V \setminus (U \cup O_i)) + 2m(U)) = \alpha \rho \sum_{u \in U} d_{V \setminus O_i}(u).  \label{eqn:3}
	\end{align}
	Here the inequality is due to $\alpha \ge 1/3$.
	
	Note that for any $u \in U$, and $v \in O_o$, $d(u) \ge d(v)$, which implies that for any $u \in U$, $d(u) \ge \frac{\sum_{v \in O_o} d(v)}{|O_o|}$. Therefore, 	
	\begin{align}
		\sum_{u \in U} d(u) \ge \frac{|U|}{|O_o|} \sum_{v \in O_o} d(v) = \frac{1}{\rho} \cdot \sum_{v \in O_o}d(v) \label{eqn:31}
	\end{align}
	Now we consider two cases. 
	
	\textbf{Case 1:} $\sum_{u \in U} d(u) \le \frac{4}{\varepsilon} \cdot m(O_i, U)$. Then,
	\begin{align}
		&&\frac{4}{\varepsilon} \cdot m(O_i, U) &\ge \sum_{u \in U} d(u) \ge \frac{1}{\rho} \cdot \sum_{v \in O_o} d(v) \tag{Using (\ref{eqn:31})}
		\\&\implies &\frac{4}{\varepsilon} \cdot m(O_i, U) &\ge \frac{8}{\varepsilon^2} \cdot \sum_{v \in O_o} d(v) \tag{Since $\rho \le \varepsilon^2/4$}
		\\&\implies &\varepsilon/2 \cdot \alpha \cdot m(O_i, U) &\ge \alpha \sum_{v \in O_o} d(v) \ge \ex[A]  \label{eqn:41}
	\end{align}
	Where we use (\ref{eqn:1}) in the last inequality. Then consider,
	\begin{align}
		\ex[B- A] &\ge -\alpha \cdot \ex[m(U^*, O_i)] - \ex[A] \nonumber
		\\&\ge \frac{\varepsilon^2}{8} \alpha \cdot m(O_i, U) - \frac{\varepsilon}{2} \cdot \alpha \cdot m(O_i, U) \ge -\varepsilon \alpha \cdot m(O_i, U) \tag{Using (\ref{eqn:2}) and (\ref{eqn:41})}
		\\&\ge - \varepsilon \cdot \cova(O)
	\end{align}
	This finishes the first case. 
	
	\textbf{Case 2:} $\sum_{u \in U} d(u) > \frac{4}{\varepsilon} \cdot m(O_i, U)$. This implies that,
	\begin{align}
		\frac{\varepsilon}{4} \cdot \sum_{u \in U} d(u) &>  \sum_{u \in U} d_{O_i}(u) \nonumber
		\\\implies \sum_{u \in U} d_{V \setminus O_i}(u) &\ge \lr{1 - \frac{\varepsilon}{4}} \cdot \sum_{u \in U} d(u) \label{eqn:42}
	\end{align}
	Then, plugging back in (\ref{eqn:3}), we obtain,
	\begin{align}
		\ex[Q_1] &\ge \alpha \rho(1-\rho/2) \cdot (1-\varepsilon/4) \cdot \sum_{u \in U} d(u) \nonumber
		\\&\ge \alpha \rho (1-\varepsilon/2) \cdot \sum_{u \in U} d(u) \nonumber
		\\&\ge \alpha \rho (1-\varepsilon/2) \cdot \frac{|U|}{|O_o|} \sum_{v \in O_i} d(v) \tag{From \ref{eqn:31}} 
		\\&\ge \alpha (1-\varepsilon/2) \cdot \sum_{v \in O_i} d(v) \nonumber
		\\&\ge A \cdot (1-\varepsilon/2)  \label{eqn:4}
	\end{align}
	Where we use (\ref{eqn:1}) in the last inequality. Then, by (\ref{eqn:2}) and (\ref{eqn:4}), we obtain that,
	\begin{align}
		\ex[B - A] = \ex[B] - \ex[A] &\ge  -\varepsilon/2 \cdot \ex[A] - \alpha \varepsilon \cdot m(O_i, V\setminus O)  \label{eqn:5}
	\end{align}
	Now we argue that $\alpha \cdot m(O_i, V\setminus O) + \ex[A] = \alpha \cdot m(O_i, V \setminus O) + A \le \cova(O)$. All edges with one endpoint in $O_i$ and other outside $O$ contribute $\alpha$ to the objective, which corresponds to the first term. Note that $A$ is exactly the contribution of edges with at least one endpoint in $O_o$ to the objective. Further, note that no such edge has one endpoint in $O_i$ and other outside $O$, and thus is not counted in the first term. Thus, the sum of two terms is upper bounded by the objective, $\cova(O)$. Plugging it back in (\ref{eqn:5}), we obtain that $\ex[B - A] \ge -\varepsilon \cdot \cova(O)$ in the second case as well. This completes the proof of the lemma and the theorem.
\end{proof}
\end{proof}
\paragraph{Example showing a gap for $\alpha < 1/3$.} We now describe examples showing that for each fixed $\alpha < 1/3$, the above strategy of focusing on a bounded number of vertices of the largest degree does not lead to a $(1-\varepsilon)$-approximation, for large enough $k$. Let $f(k, \varepsilon)$ be an arbitrary function. Consider any $\alpha = 1/3 - \mu$, where $0 < \mu \le 1/3$, and let $N \ge f(k, \epsilon)$ be a large positive integer. The graph $G = (V, E)$ showing a gap is defined as follows. $V = H \uplus L \uplus O$, where $|H| = N, |L| = kN$ and $|O| = k$, thus $|V| = N(k+1) + k$. For each vertex $v \in H$, we attach $k$ distinct vertices from $L$ as pendants. Finally, we add all $\binom{k}{2}$ edges among the vertices of $O$, making in into a complete graph. 

Each vertex of $H$ has degree exactly $k$, each vertex of $O$ has degree exactly $k-1$, and each vertex of $L$ has degree exactly $1$, this gives the sorted order. It follows that the first $f(k, \varepsilon)$ vertices in the sorted order, say $T$, all belong to $N$. Furthermore, for any subset $S \subset T$ of size $k$, $\cova(S) = \alpha \cdot k^2 = (\frac{1}{3} - \mu) \cdot k^2$. On the other hand,  $\cova(O) = (1-\alpha) \cdot \binom{k}{2} = (\frac{2}{3} + \mu ) \cdot \frac{k^2 - k}{2} \approx (\frac{1}{3} + \mu) \cdot k^2$, assuming $k$ is large enough. Thus, any $k$-sized subset $S \subseteq T$, $\frac{\cova(S)}{\cova(O)} <  \frac{1/3 - \mu}{1/3 + \mu} \le 1 - 3\mu$. Thus, $T$ does not contain a $(1-\varepsilon)$-approximate solution for any $\varepsilon < 3\mu$. This shows that our analysis of \Cref{thm:improved-fptas} is tight for the range of $\alpha \ge 1/3$. 

\section{Subexponential FPT Algorithm for \maxprob on Apex-Minor Free Graphs} \label{sec:subexp}

Fomin et al.~\cite{FominLRS11} showed that \textsc{Partial Vertex Cover} on apex-minor free graphs can be solved in time $2^{\Oh(\sqrt{k})} \cdot n^{\Oh(1)}$.
In this section, we will prove its generalization to \maxprob as well as \minprob:

\begin{theorem} \label{thm:subexp}
	For an apex graph $H$, let $\cH$ be a family of $H$-minor free graphs.
	\begin{itemize}
		\item 
		For any $\alpha \ge 1/3$, \maxprob for $\cH$ can be solved in $2^{\Oh(\sqrt{k})} \cdot n^{\Oh(1)}$ time.
		\item
		For any $\alpha \le 1/3$, \minprob for $\cH$ can be solved in $2^{\Oh(\sqrt{k})} \cdot n^{\Oh(1)}$ time.
	\end{itemize}
\end{theorem}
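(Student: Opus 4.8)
The plan is to follow the bidimensionality strategy of Fomin et al.~\cite{FominLRS11}, generalizing their structural observation about \textsc{Partial Vertex Cover} to the full range $\alpha \ge 1/3$ (maximization) and $\alpha \le 1/3$ (minimization). The engine of the argument is that on $H$-minor free graphs for an apex graph $H$, a graph with no $(t \times t)$-grid minor has treewidth $O(t)$ (the linear grid-minor bound on apex-minor free families), and \maxprob{}/\minprob{} can be solved by standard dynamic programming over a tree decomposition in time $2^{O(\mathrm{tw})} \cdot n^{O(1)}$ once we additionally track, at each bag, how many solution vertices have been used so far (a factor $k+1$, which is absorbed into $n^{O(1)}$). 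So it suffices to show that we may assume $\mathrm{tw}(G) = O(\sqrt{k})$; then the DP runs in $2^{O(\sqrt k)} \cdot n^{O(1)}$ time.

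To bound the treewidth, I first want a structural lemma analogous to the one highlighted in the introduction: there is an optimal solution $S$ that is ``adjacent to every high-degree vertex.'' Concretely, let $d^\star$ be the minimum degree over vertices of $S$; I claim that for $\alpha \ge 1/3$ (maximization) we may take $S$ so that every vertex $v \notin S$ with $d(v) \ge d^\star + 2k$ has a neighbor in $S$. The proof mirrors Lemma~\ref{lemma:approx-base}: if some such $v$ is non-adjacent to $S$, swap it in for a minimum-degree vertex $y \in S$; using $d(v) - d(y) \ge 2k$ and the bounds $|m_2|, |m_4| \le k$ together with $|1-2\alpha| \le \alpha$ (which holds exactly when $\alpha \ge 1/3$), the swap does not decrease $\cova$. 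Iterating and breaking ties by, say, lexicographic order on vertex sets, yields the desired optimal $S$. The symmetric statement for \minprob{} with $\alpha \le 1/3$ follows by the same computation with inequalities reversed (again using $|1-2\alpha| \le \alpha$ when $\alpha \le 1/3$, wait—here one needs $|1-2\alpha|$ controlled differently; for minimization the relevant sign condition is the one that makes the minimum-degree swap non-increasing, which is exactly the $\alpha \le 1/3$ regime as in Lemma~\ref{lemma:approx-base}).

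Next I turn the structural lemma into a treewidth bound via bidimensionality. Suppose $\mathrm{tw}(G)$ is large; then $G$ contains a $(t \times t)$-grid minor with $t = \Omega(\mathrm{tw}(G))$, and in particular $G$ has a large collection of vertex-disjoint connected subgraphs realizing the grid. If $d^\star$ is small (bounded by a function of $k$ alone), then $G[S]$ plus its neighborhood is small, but every vertex of degree $\ge d^\star + 2k$ must be dominated by the $k$ vertices of $S$; since in an apex-minor free (hence everywhere-sparse) graph all but $O(k \cdot (d^\star + 2k)) = O(k^2)$ — more care: the number of vertices of degree $> c$ times their... — actually the clean route is: $N[S]$ has size at most $k(\Delta_S + 1)$ where $\Delta_S$ is the max degree in $S$, and every remaining vertex has degree $< d^\star + 2k$; then the "local" nature of the coverage function forces $O$ to live in a $k$-neighborhood ball structure. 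The cleanest instantiation, following \cite{FominLRS11}, is to argue that if $G$ has a $(t\times t)$-grid minor with $t = \omega(\sqrt k)$, one can find $k$ disjoint stars/branch-sets that each could be "charged" to improve or that cannot all be dominated, contradicting optimality or the size bound on $S$; this yields $\mathrm{tw}(G) = O(\sqrt k)$ in the relevant parameter regime, and when $d^\star$ or $\OPT$ is large we instead invoke Lemma~\ref{lemma:approx-base}-style counting (as in Observation~\ref{obs:highdeg}) to conclude directly or to again bound the treewidth. I expect the main obstacle to be exactly this step: carefully separating the "$\OPT$ large / high max-degree" case (handled by greedy-degree counting, where the grid-minor / treewidth bound need not hold but the problem is easy for another reason) from the "$\OPT$ small" case (where the structural domination lemma kicks in and forces a grid-minor bound, hence small treewidth), and making the bidimensionality charging argument go through uniformly for all $\alpha$ in the stated range rather than just $\alpha = 1/2$. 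Once the case analysis cleanly yields $\mathrm{tw}(G) = O(\sqrt k)$ (or an immediate answer), the remaining dynamic-programming step is routine.
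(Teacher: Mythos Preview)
Your structural lemma is morally right (indeed the $+2k$ slack is unnecessary: when the swapped-in vertex $v$ has no neighbor in $S$, the exchange argument gives $\cova(S') - \cova(S) \ge \alpha(d(v) - d^\star)$ for $\alpha \ge 1/3$, so $d(v) \ge d^\star$ already suffices). The real gap is algorithmic: you set out to prove $\mathrm{tw}(G) = O(\sqrt{k})$, and this is simply false. Take $G$ to be an $n \times n$ grid (planar, hence apex-minor free) and $k$ a small constant; then $\mathrm{tw}(G) = \Theta(n)$, yet the instance is perfectly valid. Your domination lemma is vacuous here (all degrees are $\le 4$, so there are no ``high-degree'' vertices to dominate), and your fallback to Lemma~\ref{lemma:approx-base}/Observation~\ref{obs:highdeg} in the ``$\OPT$ large'' case only yields an approximation, not an exact answer. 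So neither branch of your case split closes.

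What the paper does instead is restrict to a subgraph whose treewidth \emph{is} small. Order the vertices by non-increasing degree, take the lexicographically smallest optimal solution $C$, and let $v_j$ be its last vertex. The exchange argument (your lemma, without the slack) shows $C$ is a dominating set of $G[V_\sigma^j]$. Now the algorithm guesses $j$; since $G[V_\sigma^j]$ is apex-minor free and has a dominating set of size $k$, its treewidth is $O(\sqrt{k})$ by the standard bidimensionality bound, and one can compute a tree decomposition of this width in polynomial time. The DP then runs on $G[V_\sigma^j]$, not on $G$; edges from $V_\sigma^j$ to the rest of $V$ are handled by a vertex weight $\omega(v) = |N(v) \setminus V_\sigma^j|$ contributing $\alpha \cdot \omega(v)$ whenever $v$ is selected. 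The missing idea in your proposal is precisely this ``guess the prefix and restrict'' step, which is what converts the domination lemma into a usable treewidth bound.
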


We will give a proof for the maximization variant.
The minimization variant follows analogously.
Let $\sigma = v_1, v_2, \ldots, v_n$ be an ordering of vertices of $V$ in the non-increasing order of degrees, with ties broken arbitrarily. That is, $d(v_1) \ge d(v_2) \ge \ldots \ge d(v_{n-1}) \ge d(v_n)$. We will denote the graph by $G = (\vs, E)$ to emphasize the fact that the vertex set is ordered w.r.t. $\sigma$. We also let $\vs^j = \LR{v_1, \ldots, v_j}$. We first prove the following lemma.

\begin{lemma} \label{lem:exchange-lem-max}	
	Let $G = (\vs, E)$ be a yes-instance for \maxprob, where $1/3 \le \alpha \le 1$. Let $C = \LR{u_{i_1}, u_{i_2}, \ldots, u_{i_k}}$ be the lexicographically smallest solution for \maxprob and $u_{i_k} = v_j$ for some $j$. Then $C$ is a dominating set of size $k$ for $G[\vs^j]$.
\end{lemma}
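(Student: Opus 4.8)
The plan is to argue by contradiction: suppose $C$ is not a dominating set of $G[\vs^j]$, so there is a vertex $v_\ell \in \vs^j \setminus C$ (with $\ell \le j$, hence $d(v_\ell) \ge d(v_j) = d(u_{i_k})$) that has no neighbor in $C$. I would then form the swap solution $C' = (C \setminus \LR{u_{i_k}}) \cup \LR{v_\ell}$, which again has size $k$, and compare $\cova(C')$ with $\cova(C)$. Writing the change exactly as in the proof of \Cref{lemma:approx-base}, removing $u_{i_k} = v_j$ costs $\alpha \cdot m(\LR{v_j}, V \setminus C) + (1-\alpha) \cdot m(\LR{v_j}, C \setminus \LR{v_j})$, and adding $v_\ell$ (which has no neighbor in $C$, hence none in $C \setminus \LR{u_{i_k}}$) gains exactly $\alpha \cdot d(v_\ell)$, since every edge at $v_\ell$ goes outside $C'$. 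So
\begin{align*}
	\cova(C') - \cova(C) = \alpha \cdot d(v_\ell) - \alpha \cdot m(\LR{v_j}, V \setminus C) - (1-\alpha) \cdot m(\LR{v_j}, C \setminus \LR{v_j}).
\end{align*}

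Next I would bound the right-hand side from below. Since $m(\LR{v_j}, V\setminus C) + m(\LR{v_j}, C \setminus \LR{v_j}) = d(v_j) \le d(v_\ell)$, and since $\alpha \ge 1/3$ means $\alpha \ge 1 - \alpha$ (as $1 - \alpha \le 2/3$... more precisely $\alpha \ge 1/3 \iff \alpha \ge 1 - 2\alpha$, but here I need $\alpha \ge 1-\alpha$, i.e.\ $\alpha \ge 1/2$ — so this needs more care, see below), the coefficient on the "loss" terms is at most $\alpha$ each, giving $\cova(C') - \cova(C) \ge \alpha \cdot d(v_\ell) - \alpha \cdot d(v_j) \ge 0$. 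The subtlety is that when $1/3 \le \alpha < 1/2$ we have $1 - \alpha > \alpha$, so the $(1-\alpha) \cdot m(\LR{v_j}, C\setminus\LR{v_j})$ term is not dominated by $\alpha \cdot d(v_j)$ directly. I expect this to be the main obstacle, and I would resolve it by observing that $C\setminus\LR{v_j} = C' \setminus \LR{v_\ell}$ and re-accounting: an edge from $v_j$ into $C$ contributes $1-\alpha$ to $\cova(C)$ but, after the swap, its $C$-endpoint stays in $C'$ while $v_j$ leaves, so it contributes $\alpha$ to $\cova(C')$ — hence such an edge changes the objective by $\alpha - (1-\alpha) = 2\alpha - 1 \ge -(1-2\alpha)$, and since $\alpha \ge 1/3$ gives $1 - 2\alpha \le 1/3 \le \alpha$, actually $2\alpha - 1 \ge -1/3 \ge \alpha - 1 \ge -(1-\alpha)$... the clean statement is that each such internal edge, rather than being "lost," is merely "relocated" from a $(1-\alpha)$-edge to an $\alpha$-edge, so its contribution decreases by at most $|2\alpha-1| = |1-2\alpha|$, and the true identity is
\begin{align*}
	\cova(C') - \cova(C) = \alpha \cdot d(v_\ell) - \alpha \cdot m(\LR{v_j}, V \setminus C) - (1-\alpha)\cdot m(\LR{v_j}, C') + \ldots
\end{align*}
which I would simplify to $\cova(C') - \cova(C) = \alpha(d(v_\ell) - m_{\mathrm{out}}) + (2\alpha - 1) m_{\mathrm{in}}$ where $m_{\mathrm{out}} = m(\LR{v_j}, V\setminus C)$, $m_{\mathrm{in}} = m(\LR{v_j}, C\setminus\LR{v_j})$, and $m_{\mathrm{out}} + m_{\mathrm{in}} = d(v_j) \le d(v_\ell)$. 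Then $\cova(C') - \cova(C) \ge \alpha \cdot m_{\mathrm{in}} + (2\alpha-1) m_{\mathrm{in}} = (3\alpha - 1) m_{\mathrm{in}} \ge 0$ precisely because $\alpha \ge 1/3$.

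Finally, from $\cova(C') \ge \cova(C) = \OPT$ we conclude $C'$ is also an optimal solution. It remains to derive the contradiction with lexicographic minimality: $C$ and $C'$ agree on $C \cap C' = C \setminus \LR{v_j}$, and $C'$ replaces $v_j$ by $v_\ell$ with $\ell < j$ (strictly, since $v_\ell \ne v_j$ and $\ell \le j$). I would fix a convention — e.g.\ identify each $k$-set with its sorted index vector and compare lexicographically, or with the index of its largest-index element as the primary key — under which replacing the top element $v_j$ by a smaller-index vertex yields a lexicographically smaller $k$-set; this contradicts the choice of $C$. (I would state this convention explicitly at the top of the section so the phrase "lexicographically smallest" is unambiguous.) Hence $C$ must dominate $G[\vs^j]$, completing the proof.
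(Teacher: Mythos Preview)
Your approach is essentially the same as the paper's. After your self-correction you arrive at the exact identity the paper uses: writing $m_{\mathrm{out}} = m(\{v_j\}, V\setminus C)$ and $m_{\mathrm{in}} = m(\{v_j\}, C\setminus\{v_j\})$, the swap changes the objective by $\alpha d(v_\ell) - \bigl(\alpha m_{\mathrm{out}} + (1-2\alpha)m_{\mathrm{in}}\bigr)$, and the paper bounds this below by $\alpha(d(v_\ell)-d(v_j))\ge 0$ via $1-2\alpha \le \alpha$, while you bound it by $(3\alpha-1)m_{\mathrm{in}}\ge 0$; these are the same use of $\alpha\ge 1/3$. Your initial formula with coefficient $(1-\alpha)$ on $m_{\mathrm{in}}$ is wrong (it forgets that internal edges at $v_j$ become cut edges after the swap), but you catch and fix this, so there is no actual gap.
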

\begin{proof}
	Suppose for the contradiction that $C$ is not a dominating set for $G[\vs^j]$. Then, there exists a vertex $v_i$ with $1 \le i < j$ such that $N[v_i] \cap C = \emptyset$. Set $C' = (C \setminus \LR{v_j}) \cup \LR{v_i}$. Note that $d(v_i) \ge d(v_j)$. Define the following:
	\begin{align*}
		m_1 &= m(\LR{v_j}, V \setminus C),
		\\m_2 &= m(\LR{v_j}, C \setminus \LR{v_j}),
		\\m_3 &= m(\LR{v_i}, (V \setminus C)\cup \LR{v_j}) = d(v_i),
		\\m_4 &= m(\LR{v_i}, C \setminus \LR{v_j}) = 0.
	\end{align*}
	We will show that $C'$ is another solution for the \maxprob instance.
	Since $C' \setminus \LR{v_i} = C \setminus \LR{v_j}$, it suffices to show that
	\begin{align*}
		\cova(C') - \cova(C) = (\cova(C') - \cova(C' \setminus \LR{v_i})) - (\cova(C) - \cova(C \setminus \LR{v_j}))
	\end{align*}
	is nonnegative.
	By definition,
	\begin{align}
		&\cova(C') - \cova(C' \setminus \LR{v_i}) = \alpha \cdot m_3 + ((1 - \alpha) - \alpha) \cdot m_4 = \alpha \cdot d(v_i) \text{ and } \nonumber\\ 
		&\cova(C) - \cova(C \setminus \LR{v_j}) = \alpha \cdot m_1 + ((1-\alpha) - \alpha) \cdot m_2 \le \alpha \cdot (m_1 + m_2) = \alpha \cdot d(v_j), \label{eqn:bound}
	\end{align}
	where the inequality is due to the assumption that $\alpha \ge 1/3$.	Therefore, 
	\begin{align*}
		\cova(C') - \cova(C)
		= \alpha \cdot (d(v_i) - d(v_j)) \ge 0,
	\end{align*}
	which is a contradiction to the assumption that $C$ is the lexicographically smallest solution for \maxprob.
\end{proof}

In view of \Cref{lem:exchange-lem-max}, we can use the following approach to search for the lexicographically smallest solution $C$.
First, we guess the last vertex $v_j$ of $C$ in the ordering $\sigma$, i.e.,
we search for a solution $C$ such that $v_j \in C$ and $C \subseteq V_{\sigma}^{j}$.
If $G[V_{\sigma}^j]$ has no dominating set of size at most, say $2k$, then we reject.
This can be done in polynomial time, since \textsc{Dominating Set} admits a PTAS on apex-minor free graphs \cite{DemaineH05}.
We thus may assume that there is a dominating set of size $2k$ in $G[V_{\sigma}^j]$.
It is known that an apex-minor free graph with a dominating set of size $\kappa$ has treewidth $\Oh(\sqrt{\kappa})$, where $\Oh$ hides a factor depending on the apex graph whose minors are excluded \cite{DemaineFHT04,DemaineH08,FominGT09}.
We can use a constant-factor approximation algorithm of Demaine \cite{DemaineHK05} to find a tree decomposition $\mathcal{T}$ of width $w \in \Oh(\sqrt{k})$.
Finally, we solve the problem via dynamic programming over the tree decomposition.
Bonnet et al.~\cite{BEPT15} gave a  $\Oh^*(2^w)$-time algorithm that solves \maxprob with a tree decomposition of width $w$ given.
We need to solve a slightly more general problem because $\mathcal{T}$ is the tree decomposition is over $V_{\sigma}^j$.
To remove $V \setminus V_{\sigma}^j$, we introduce a weight $\omega \colon V_{\sigma}^j \to \mathbb{N}$ defined by $\omega(v) = |N(v) \cap (V \setminus V_{\sigma}^j)|$.
The objective is then to maximize $\cova(C) + \alpha \sum_{v \in C} \omega(C)$.
The dynamic programming algorithm of Bonnet et al. can be adapted to solve this weighted variant in the same running time.
Thus, we obtain a $2^{\Oh(\sqrt{k})} \cdot n^{\Oh(1)}$-time algorithm for \maxprob.

For \minprob, we can show the following lemma whose proof is omitted because it is almost analogous to the previous one. The only change is that, $\vs$ refers to the vertices in the non-decreasing order of degrees. Also, we consider the regime where $0 \le \alpha \le 1/3$, which implies $\alpha \le 1-2\alpha$, which would give the reverse inequality in (\ref{eqn:bound}). 
\begin{lemma} \label{lem:exchange-lem-min}
	Let $G = (\vs, E)$ be a yes-instance for \maxprob, where $0 \le \alpha \le 1/3$. Let $C = \LR{u_{i_1}, u_{i_2}, \ldots, u_{i_k}}$ be the lexicographically smallest solution for \maxprob and $u_{i_k} = v_j$ for some $j$. Then $C$ is a dominating set of size $k$ for $G[\vs^j]$.
\end{lemma}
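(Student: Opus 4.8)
The plan is to mirror the proof of \Cref{lem:exchange-lem-max} almost verbatim, reversing the two inequalities that hinge on the position of $\alpha$ relative to $1/3$ and on the direction of the degree ordering. (The statement should of course read \minprob throughout; here $\sigma$ lists vertices in \emph{non-decreasing} degree order, so $i < j$ implies $d(v_i) \le d(v_j)$, and we are in the regime $0 \le \alpha \le 1/3$, i.e.\ $0 \le \alpha \le 1 - 2\alpha$.) As in the maximization case, ``solution'' means an optimal, here minimum-$\cova$, vertex set of size $k$, and $C$ is the lexicographically smallest such when solutions are regarded as sorted tuples of vertex indices.

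Suppose towards a contradiction that $C$ is not a dominating set of $G[\vs^j]$. Then there is an index $i$ with $1 \le i < j$ and $N[v_i] \cap C = \emptyset$; in particular $v_i \notin C$, so $C' \coloneqq (C \setminus \LR{v_j}) \cup \LR{v_i}$ again has size exactly $k$, and $d(v_i) \le d(v_j)$ by the choice of $\sigma$. Since $C' \setminus \LR{v_i} = C \setminus \LR{v_j}$, I would write $\cova(C') - \cova(C) = (\cova(C') - \cova(C' \setminus \LR{v_i})) - (\cova(C) - \cova(C \setminus \LR{v_j}))$ and evaluate each marginal via the identity $\cova(T \cup \LR{v}) - \cova(T) = \alpha \cdot m(\LR{v}, V \setminus (T \cup \LR{v})) + (1 - 2\alpha) \cdot m(\LR{v}, T)$, valid for $v \notin T$. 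Because $v_i$ has no neighbour in $C \setminus \LR{v_j}$, the ``internal'' term vanishes and the first marginal equals $\alpha \cdot d(v_i)$. For the second marginal, setting $m_1 = m(\LR{v_j}, V \setminus C)$ and $m_2 = m(\LR{v_j}, C \setminus \LR{v_j})$ (so $m_1 + m_2 = d(v_j)$), we get $\alpha m_1 + (1 - 2\alpha) m_2 \ge \alpha (m_1 + m_2) = \alpha \cdot d(v_j)$, where this time the inequality uses $1 - 2\alpha \ge \alpha$, i.e.\ $\alpha \le 1/3$ --- the reversed version of the bound in (\ref{eqn:bound}). Combining, $\cova(C') - \cova(C) \le \alpha(d(v_i) - d(v_j)) \le 0$.

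Hence $\cova(C') \le \cova(C)$, and since $C$ is optimal for the minimization problem this forces $\cova(C') = \cova(C)$, so $C'$ is itself an optimal solution. To reach the contradiction I would then check that $C'$ is lexicographically smaller than $C$: both sorted tuples are obtained from $C$'s tuple by deleting its largest entry $j$ and inserting the strictly smaller entry $i$, so at the first coordinate where they differ $C'$ carries the smaller value. This contradicts the choice of $C$ as the lexicographically smallest optimal solution.

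I do not anticipate a real obstacle, as this is a direct analogue of \Cref{lem:exchange-lem-max}; the two points needing care are (i) that the inequality $\alpha m_1 + (1-2\alpha) m_2 \ge \alpha\, d(v_j)$ now points the other way, which is exactly where the hypothesis $\alpha \le 1/3$ enters, and (ii) that $v_i \notin C$ (guaranteeing $|C'| = k$) and that the swap strictly lowers the lexicographic order, both immediate from $N[v_i] \cap C = \emptyset$ and $i < j$. The boundary value $\alpha = 1/3$ is harmless, since there $\cova(C)$ depends only on $\sum_{v \in C} d(v)$ and both marginals collapse to $\tfrac{1}{3} d(v_i)$ and $\tfrac{1}{3} d(v_j)$.
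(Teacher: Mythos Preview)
Your proposal is correct and follows exactly the approach the paper intends: the paper itself omits the proof, noting only that $\sigma$ is now the non-decreasing degree order and that the hypothesis $\alpha \le 1/3$ reverses the inequality in~(\ref{eqn:bound}), which is precisely the argument you carry out. Your added remarks on why $|C'| = k$ and why $C'$ is lexicographically smaller are fine clarifications but not new ideas.
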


With this lemma at hand, an analogous algorithm solves \minprob in $2^{\Oh(\sqrt{k})} \cdot n^{\Oh(1)}$ time, thereby proving \Cref{thm:subexp}.

\section{Conclusion} \label{sec:conclusion}
In this paper, we demonstrated that the algorithms exploiting the ``degree-sequence'' that have been successful for designing algorithms for \textsc{Max $k$-Vertex Cover} naturally generalize to $\textsc{Max/}\minprob$. Specifically, we designed FPT approximations for $\textsc{Max}/\minprob$ parameterized by $k, \alpha$, and $\varepsilon$, for any $\alpha \in (0, 1]$. For \maxprob, this result is tight since, when $\alpha = 0$, the problem is equivalent to \textsc{Densest $k$-Subgraph}, which is hard to approximate in FPT time \cite{ManurangsiRS21}. We also designed subexponential FPT algorithms for \maxprob (resp.\ \minprob) for the range $\alpha \ge 1/3$ (resp.\ $\alpha \le 1/3$) on any apex-minor closed family of graphs. It is a natural open question whether one can obtain subexponential FPT algorithms for $\textsc{Max}/\minprob$ for the entire range $\alpha \in [0, 1]$. A notable special case is that of \textsc{Densest $k$-Subgraph} on planar graphs. In this case, the problem is not even known to be \textsf{NP}-hard, if the subgraph is allowed to be disconnected. For the \textsc{Densest Connected $k$-Subgraph} problem, it was shown by Keil and Brecht~\cite{KeilB91} that the problem is NP-complete on planar graphs. From the other side, it can be shown that \textsc{Densest Connected $k$-Subgraph}  admits a subexponential in $k$ randomized algorithm on apex-minor free graphs using the general results of Fomin et al.~\cite{FominLMPPS22}. Thus, dealing with disconnected dense subgraphs is difficult for both algorithms and lower bounds.

\bibliography{references}

\end{document}